\theoremstyle{definition}
\newtheorem{definition}{Definition}[section]
\newtheorem{theorem}{Theorem}[section]
\newtheorem{lemma}[theorem]{Lemma}
\definecolor{dkgreen}{rgb}{0,0.6,0}
\definecolor{gray}{rgb}{0.5,0.5,0.5}
\definecolor{mauve}{rgb}{0.58,0,0.82}
\tiny\color{gray},
\begin{document}

\settopmatter{printfolios=false}

\title{Compiler Support for Speculation in Decoupled Access/Execute Architectures}

\author{Robert Szafarczyk}
\orcid{0009-0007-8883-1747}
\affiliation{%
  \institution{University of Glasgow}
  \city{Glasgow}
  \country{United Kingdom}
}
\email{robert.szafarczyk@glasgow.ac.uk}

\author{Syed Waqar Nabi}
\orcid{0000-0003-3835-4851}
\affiliation{%
  \institution{University of Glasgow}
  \city{Glasgow}
  \country{United Kingdom}
}
\email{syed.nabi@glasgow.ac.uk}

\author{Wim Vanderbauwhede}
\orcid{0000-0001-6768-0037}
\affiliation{%
  \institution{University of Glasgow}
  \city{Glasgow}
  \country{United Kingdom}
}
\email{wim.vanderbauwhede@glasgow.ac.uk}



\begin{CCSXML}
<ccs2012>
   <concept>
       <concept_id>10010583.10010786.10010787.10010789</concept_id>
       <concept_desc>Hardware~Emerging languages and compilers</concept_desc>
       <concept_significance>500</concept_significance>
       </concept>
   <concept>
       <concept_id>10011007.10011006.10011041</concept_id>
       <concept_desc>Software and its engineering~Compilers</concept_desc>
       <concept_significance>300</concept_significance>
       </concept>
</ccs2012>
\end{CCSXML}

\ccsdesc[500]{Hardware~Emerging languages and compilers}
\ccsdesc[300]{Software and its engineering~Compilers}


\keywords{decoupled access/execute; compiler speculation}



\begin{abstract}
Irregular codes are bottlenecked by memory and communication latency.
Decoupled access/execute (DAE) is a common technique to tackle this problem.
It relies on the compiler to separate memory address generation from the rest of the program, however, such a separation is not always possible due to control and data dependencies between the access and execute slices, resulting in a loss of decoupling. 

In this paper, we present compiler support for speculation in DAE architectures that preserves decoupling in the face of control dependencies.
We speculate memory requests in the access slice and poison mis-speculations in the execute slice without the need for replays or synchronization.
Our transformation works on arbitrary, reducible control flow and is proven to preserve sequential consistency.
We show that our approach applies to a wide range of architectural work on CPU/GPU prefetchers, CGRAs, and accelerators, enabling DAE on a wider range of codes than before.
\end{abstract}

\maketitle

\section{Introduction}
\begin{figure}[t]
\centering
 \begin{subfigure}[b]{0.45\textwidth}
     \centering
     \includegraphics[width=\textwidth]{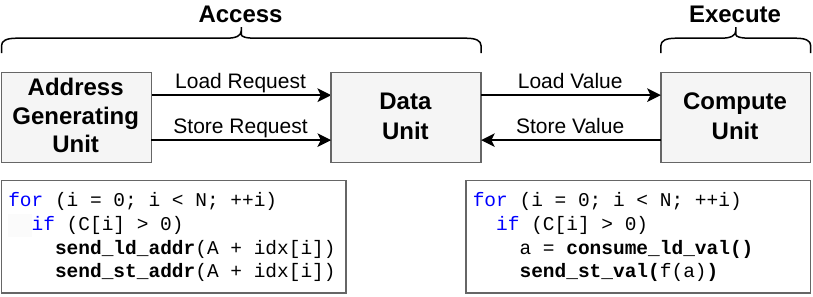}
     \caption{An architecture with decoupled address generation, memory access, and compute.}
     \label{fig:DecoupledAccessExecute_a}
 \end{subfigure}
 \hfill
 \begin{subfigure}[b]{0.45\textwidth}
     \centering
     \includegraphics[width=\textwidth]{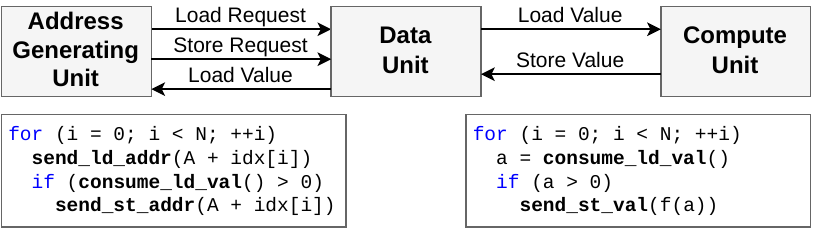}
     \caption{Loss-of-decoupling between address generation and memory access due to a dependency on the memory value.}
     \label{fig:DecoupledAccessExecute_b}
 \end{subfigure}
 \hfill
 \begin{subfigure}[b]{0.45\textwidth}
     \centering
     \includegraphics[width=\textwidth]{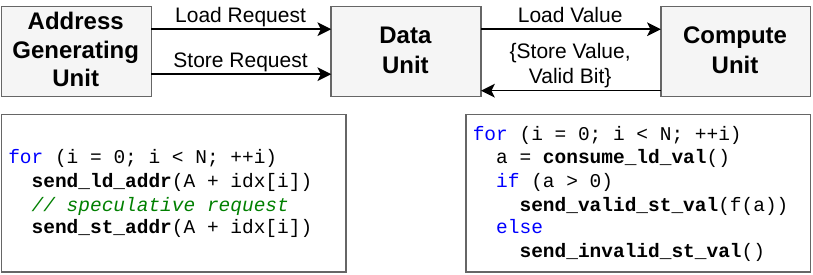}
     \caption{Our contribution: compiler support for speculation removes loss-of-decoupling due to control dependencies.}
     \label{fig:DecoupledAccessExecute_c}
 \end{subfigure}

\caption{A decoupled access/execute architecture template.}
\label{fig:DecoupledAccessExecute}
\end{figure}

Irregular codes are characterized by data-dependent memory accesses and control flow, for example:
\begin{lstlisting}
    for (int i = 0; i < N; ++i)
        if (C[i] > 0) 
            A[idx[i]] = f(A[idx[i]]);
\end{lstlisting}
This code has unpredictable control flow that causes frequent branch mis-predictions on CPUs and thread divergence on GPUs.
Because of these limitations, and challenges with Moore's Law and Dennard performance scaling, computer architects are interested in adding CPU/GPU structures to accelerate such code patterns, or even to use accelerators specialized for a given algorithm \cite{golden_age_for_arch}.

Many of the proposed architectures follow the decades-old idea of a \textit{Decoupled Access/Execute} (DAE) architecture shown in Figure~\ref{fig:DecoupledAccessExecute}.
In DAE, memory accesses are \textit{decoupled} from computation to avoid stalls resulting from unpredictable loads \cite{decoupled_access_exec}.
The address generation unit (AGU) sends load and store requests to the data unit (DU), while the DU sends load values to and receives store values from the compute unit (CU).
All communication is FIFO based and ideally the AGU to DU communication is one-directional, allowing the address streams from the AGU to run ahead w.r.t the CU.
Figure~\ref{fig:DecoupledAccessExecute_a} shows an example of such a DAE architecture implementing the earlier code snippet.

DAE is a general technique applicable to many computational models:
it is used in specialized FPGA accelerators generated from High-Level Synthesis (HLS) \cite{decoupled_memory_prefetching, dae_accel, decoupled_memory_wawrzynek, brainwave_ms_chung2018serving, ThunderGP, HLS_runahead, szafarczyk_fpt_23, szafa_fpga25}; 
in Coarse Grain Reconfigurable Architectures (CGRAs) \cite{fifer, Plasticine_rdu, SambaNova, fan2023_europar_decoupled_dataflow, hong2020decoupling_cgra, pellauer2019buffets, dae_cgra_cascade, cgra_dae_softbrain};
and in CPU/GPU prefetchers \cite{outrider_decoupled_strands, phloem, desc_cpu, gpu_decoupled, nvidia_hopper_tma, wasp_gpu_prefetch, qin2023roma, dae_cpu_stream_nowatzki_isca19}.
For example, NVIDIA introduced hardware-accelerated asynchronous memory copies \cite{nvidia_hopper_tma}.
The CUDA programmer can provide a ``copy descriptor'' of a tensor to copy and the hardware will run ahead and generate the corresponding addresses in a Tensor Memory Unit.




The common denominator of all these works is that they rely on either the programmer or the compiler to decouple address-generating instructions from the rest of the program.
However, it has long been recognized that such a decoupling is not always possible \cite{effectiveness_of_decoupling_sc, compiling_and_opt_for_dae}.
If any of the address-generating instructions for array \texttt{A} depend on a value loaded from \texttt{A}, then there is a \textit{loss-of-decoupling} (LoD) \cite{loss_of_decoupling}.
Access patterns such as \texttt{A[f(A[i])]} are rare, but control dependencies that involve loads from \texttt{A} are common.
For example, consider replacing \texttt{C[i]} with \texttt{A[i]} in our running example:
\begin{lstlisting}
    for (int i = 0; i < N; ++i)
        if (A[i] > 0) 
            A[idx[i]] = f(A[idx[i]]);
\end{lstlisting}
Here, there is a LoD, because the \texttt{A} store is control-dependent on a branch that loads from \texttt{A}.
Whereas before the load from \texttt{C} could be prefetched, now the AGU/DU communication is synchronized, because the AGU waits for \texttt{A} values from the DU before deciding if a store address should be generated, as shown in Figure~\ref{fig:DecoupledAccessExecute_b}.
In turn, the load waits for the store address to ensure that there is no aliasing---the store address is needed for memory disambiguation.
As a result, the AGU cannot run ahead of the CU anymore, resulting in decreased pipeline parallelism, which Figure~\ref{fig:Pipeline} illustrates.

\begin{figure}[t]
\centering
 \begin{subfigure}[b]{0.475\textwidth}
     \centering
     \includegraphics[width=\textwidth]{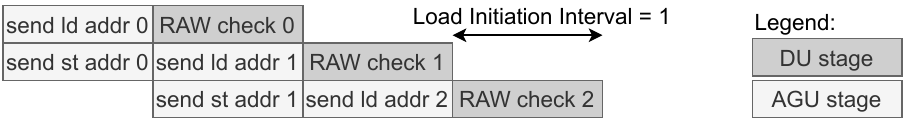}
     \caption{Pipeline of decoupled address generation from Figure \ref{fig:DecoupledAccessExecute_a}.}
     \label{fig:PipelineA}
 \end{subfigure}
 \hfill
 \begin{subfigure}[b]{0.475\textwidth}
     \centering
     \includegraphics[width=\textwidth]{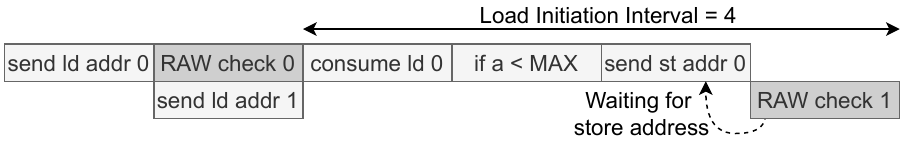}
     \caption{Pipeline of non-decoupled address generation from Figure \ref{fig:DecoupledAccessExecute_b}.}
     \label{fig:PipelineB}
 \end{subfigure}

\caption{Comparison of a decoupled and non-decoupled address generation. Non-decoupled address generation results in a later arrival of the store address, which stalls the RAW check for the next load, lowering load throughput.}
\label{fig:Pipeline}
\end{figure}




One approach for restoring decoupling in this case is control speculation.
As shown in Figure~\ref{fig:DecoupledAccessExecute_c}, we can hoist the store request out of the \textit{if}-condition in the AGU (speculation), and later \textit{poison} the store in the CU on mis-speculation (store invalidation).
However, it is unclear how the compiler should coordinate the speculation and recovery transformations across two distinct control-flow graphs.
While the example from Figure~\ref{fig:DecoupledAccessExecute_c} is trivial, the task quickly becomes complicated with more speculated stores and nested control flow, as we demonstrate in the next section.
The \textit{key challenge} here is to guarantee that the order of store requests sent from the AGU matches the order of store values or kill signals sent from the CU on all control-flow paths.

General compiler support for speculated stores in DAE architectures is an open question that we tackle in this paper, making the following contributions:
\begin{itemize}
    \item We give a formal description of the fundamental reasons why address generation cannot always be decoupled from the rest of the program (\S\ref{sec:LoDAnalysis}).
    \item We describe compiler support for speculative memory in DAE architectures, solving the LoD problem due to control dependencies. We propose two algorithms: one for speculating memory requests in the AGU, and one for poisoning mis-speculations in the CU (\S\ref{sec:compiler_support}).
    \item We prove that our speculation approach preserves the sequential consistency of the original program and does not introduce deadlocks (\S\ref{sec:proof}).
    \item We show that our work enables DAE on a wider class of codes than before, with applications in CPU/GPU prefetchers, CGRAs, and FPGA accelerators.
    \item We evaluate our DAE speculation approach on accelerators generated from HLS implementing codes from the graph and data analytics domain. We achieve an average $1.9\times$ (up to $3\times$) speedup over the baseline HLS implementations. We show that our approach has no mis-speculation penalty and minimal code size impact (average accelerator area increase $<5\%$) (\S\ref{sec:evaluation}).
\end{itemize}

\section{Motivating Example} \label{sec:ordering_problem}

In this brief section, we show why an obvious approach to speculation in DAE architectures is incorrect.

The FIFO-based nature of DAE requires that the order of memory requests (speculative or not) generated in the AGU matches exactly the order of load/store values (poisoned or not) in the CU.
The motivating example in Figure~\ref{fig:DecoupledAccessExecute_c} contains just one speculative store and one path through the compute CFG where the speculation becomes unreachable, making the problem of ordering trivial in that case.

Consider the more complex code in Figure~\ref{fig:ComplexDecouplingExample_a} with three stores $s_0$, $s_1$, and $s_2$.
Speculating all store requests in the AGU might result in the store request order $(s_2, s_0, s_1)$.
In the CU, we need to guarantee the same order of corresponding store values (poisoned or not) on every possible control-flow path through the loop.
Unfortunately, the obvious approach that worked for the trivial example in Figure~\ref{fig:DecoupledAccessExecute_c} does not work here. 
If we poison values at points where the corresponding speculation becomes unreachable, as illustrated in Figure~\ref{fig:ComplexDecouplingExample_b}, then we end up with three possible orderings of store values depending on the CFG path in the CU, but only one of the orderings is correct.
This is why any previous implementations of speculative stores in DAE architectures has only considered trivial triangle or diamond shaped CFGs \cite{desc_cpu}, like the one in Figure~\ref{fig:DecoupledAccessExecute_c}.
Generalized compiler support for store speculation that guarantees the correct order of poisoning is the \textit{key challenge} that we solve in this paper.

\begin{figure}[t]
\centering
 \begin{subfigure}[b]{0.47\textwidth}
     \centering
     \includegraphics[width=\textwidth]{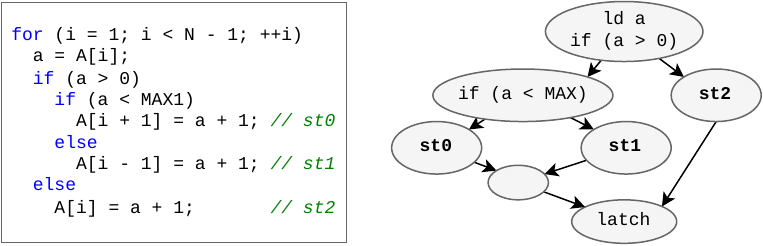}
     \caption{Code and control-flow graph of a loop with three control-dependent stores causing a loss-of-decoupling.}
     \label{fig:ComplexDecouplingExample_a}
 \end{subfigure}
 \hfill
 \begin{subfigure}[b]{0.47\textwidth}
     \centering
     \includegraphics[width=\textwidth]{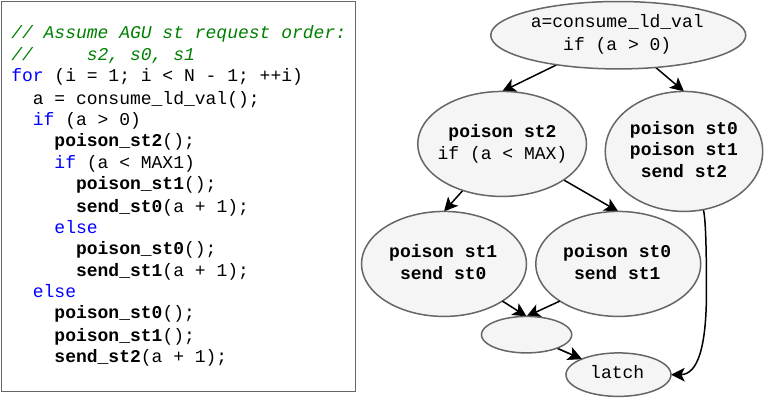}
     \caption{Depending on control flow, the order of store values can be: $(s_2, s_1, s_0)$, $(s_2, s_0, s_1)$, $(s_0, s_1, s_2)$, but only $(s_2, s_0, s_1)$ is correct.}
     \label{fig:ComplexDecouplingExample_b}
 \end{subfigure}

\caption{Poisoning speculated stores immediately when they become unreachable results in an ordering mismatch between AGU store requests from and CU store values.}
\label{fig:ComplexDecouplingExample}
\end{figure}


\section{Background} \label{sec:background}

In this section, we describe the architectural support needed to enable speculative DAE and some compiler preliminaries. 

\subsection{Architectural Support} \label{sec:accelerator_model}

Our speculation technique requires architectural support for predicated stores and FIFOs.
Store values are tagged with a \textit{poison bit} that, when set, causes the corresponding store request to be dropped in the DU without committing a store.
We say that a store request gets killed (or poisoned) if its corresponding store value has the poison bit set.
This is a lightweight form of speculation that does not require replays in the CU and does not result in out-of-bounds stores, because \textit{mis-speculated stores are never committed}.
Speculative loads can be supported by simply discarding the value of a mis-speculated load.

Predicated stores are easy to support in hardware since the underlying memory protocol usually already uses a \textit{valid} signal.
For example, the commonly used AXI4 interface \cite{AXI_ARM} has a strobe signal to indicate which bytes are valid.
Architectural FIFOs (queues) are also commonly added in works on CPU/GPU microarchitecture or can be relatively cheaply implemented in software. 
For example, works on DAE CPU/GPU prefetchers add architectural FIFOs and extend the ISA with instructions for producing load/store addresses and consuming/producing store values \cite{outrider_decoupled_strands, phloem, desc_cpu, gpu_decoupled, nvidia_hopper_tma, wasp_gpu_prefetch, qin2023roma}.

The prefetcher from \cite{desc_cpu} enables predicated stores with a \textit{store\_inv} instruction, but the authors support only simple triangle or diamond control flow patterns, calling for future work on general speculation support.
We discuss concrete examples of architectures that can benefit from our work in \S\ref{sec:applications}.
We evaluate our work on accelerators generated from HLS, where we have complete control of the memory interfaces.




\subsection{Compiler Preliminaries} \label{sec:compiler_Preliminaries}

We use an SSA-based compiler representation and associated analyses \cite{ssa_book}.
In particular, we use the control-flow graph and dominator tree to calculate control dependencies \cite{Ottenstein1990ThePD}, and we use the SSA def-use chain for data dependencies.

We use a canonical loop representation: loops have a single header block and a single loop backedge going from the loop latch to the loop header.
Our transformation assumes reducible control flow---CFG edges can be partitioned into two disjoint sets, forward and back edges, such that the forward edges form a directed acyclic graph (DAG).
Irreducible control flow can be made reducible with node splitting \cite{unstructured_to_structured_1973, unstructured_to_structured}.

For completeness, we briefly describe how our compiler implements a DAE architecture:
\begin{enumerate}
    \item \textbf{AGU:} For each memory operation to be decoupled, we change it to a \texttt{send\_ld\_addr} or a \texttt{send\_st\_addr} function that sends the memory address to the \textbf{DU}. 
    \item \textbf{CU:} Dually, in the CU we change each decoupled memory operation to a \texttt{consume\_val} or \texttt{produce\_val} function that receives or sends values to or from the DU. 
    \item \textbf{Dead Code Elimination:} We run a standard DCE pass in the CU to remove the now unnecessary address generation code. In the AGU, we delete all side effect instructions that are not part of the address generation def-use chains, and then also run a standard DCE pass. We also use a control-flow simplification pass that removes empty blocks potentially created by DCE.
\end{enumerate}

The \texttt{send\_ld\_addr}, \texttt{send\_st\_addr}, \texttt{consume\_val}, and\linebreak \texttt{produce\_val} functions are implementation dependent.
For example, if we target CPU/GPU prefetchers, such as \cite{desc_cpu, wasp_gpu_prefetch}, then these would translate to instructions.
For accelerators, they would be translated to FIFO writes/reads.




\section{Loss-of-Decoupling Analysis} \label{sec:LoDAnalysis}

LoD events arise when the address generation for a given memory access depends on a load that cannot be trivially prefetched, causing the AGU, DU, and CU communication to be synchronized.
By \textit{non-trivially prefetched} we mean loads that have a RAW hazards, i.e., the DU needs to receive all previous store addresses in program order to perform memory disambiguation before executing the load.

Given a set of address-generating instructions $G$, and a set of memory load instructions $A$ using addresses generated by instructions in $G$, there is a loss of decoupling if:

\begin{definition}[\textbf{LoD Data Dependency}]
There exists a path in the def-use chain from $a \in A$ to $g \in G$. While encountering a $\phi$-node on the def-use chain leading to $g$, we also trace the def-use paths of the terminator instructions $T$ in the $\phi$-node incoming basic blocks to see if any terminator instruction in $T$ depends on any $a \in A$.
\end{definition}

\begin{definition}[\textbf{LoD Control Dependency}]
There exists an instruction $g \in G$ that is control-dependent on a branch instruction $b$, and there is a path in the def-use chain from $a \in A$ to $b$. We call the basic block that contains $b$ the \textit{LoD control dependency source}. Note that the LoD control dependency source need not be the immediate control dependency of $g$, and that $g$ might have multiple LoD control dependencies.
\end{definition}

Depending on the hardware context, the definition of the $A$ set can be expanded or narrowed.
For example, if the AGU is implemented in hardware with limited control flow support, then $A$ could include all branch instructions.
On the other hand, given an address generating instruction, we could limit $A$ to only include loads from the same array for which the given address is generated---this could be useful if we only want to preserve decoupling for that array and do not care about losing decoupling for other arrays.
Our speculation technique applies equally well to all these definitions.

An example of a LoD data dependency is the \texttt{A[f(A[i])]} access.
Our speculation approach does not recover decoupling for such cases, but fortunately such accesses are rare.
An example of a more common LoD data dependency is the code pattern \texttt{if (A[i]) A[i++] = 1}.
In this case, the def-use chain leading to the definition of the store address contains a $\phi$-node (\texttt{i}) whose value depends on loading from $A$.
Such a pattern is sometimes found in algorithms that operate on dynamically growing data structures, e.g. queues or stacks.
Our speculation technique does not work on such cases either, but this is not a large limitation, since performance oriented codes typically do not use dynamically growing structures, instead opting for implementations with bounded space requirements that can be allocated statically \cite{syn_adt}.

An example of LoD due to a control dependency is shown in Figure~\ref{fig:DecoupledAccessExecute_b}.
This case is much more common than a direct data dependency and is the focus of this paper.


\section{Compiler Support for Speculation} \label{sec:compiler_support}

We now describe our dual transformations that enable speculation in the AGU and poison mis-speculations in the CU.

\begin{figure*}[ht]
\centering
\includegraphics[width=0.99\textwidth]{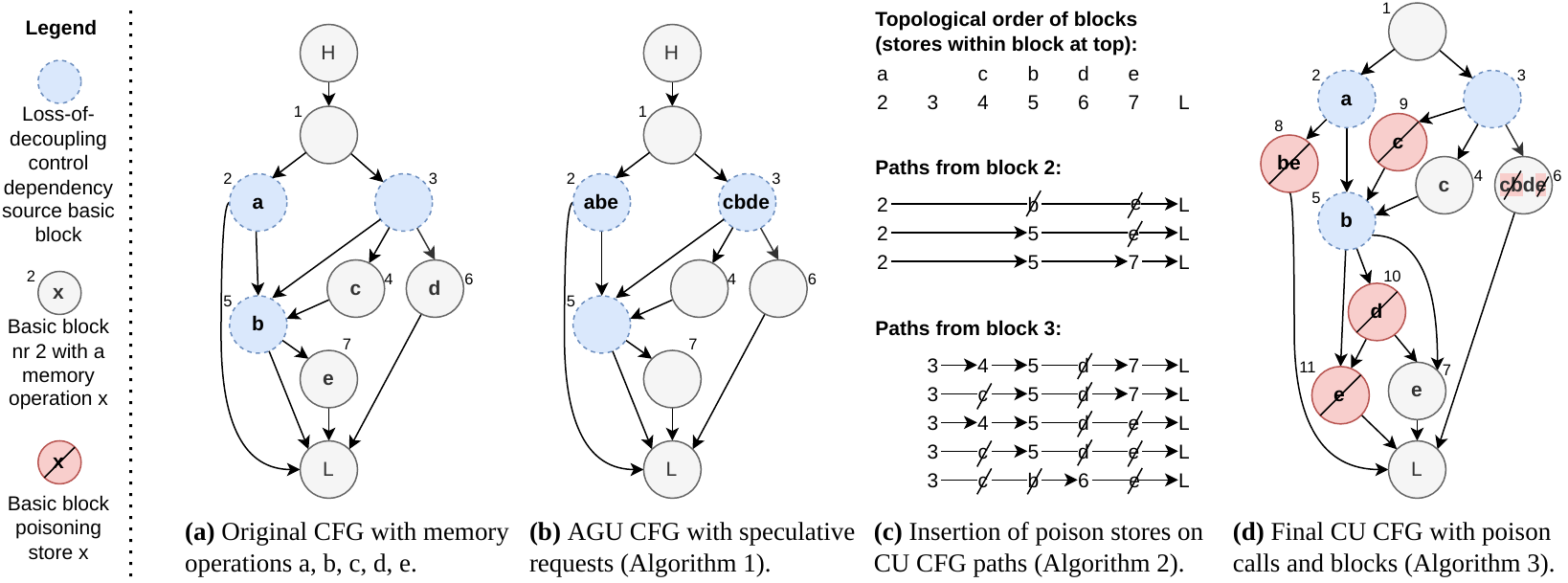}

\caption{An example of introducing speculative memory requests in the AGU (\S \ref{sec:Speculative_Memory_Requests}); and poisoned stores in the CU (\S \ref{sec:Poisoning_mis_speculated_Stores}). Block 6 in subfigure (d) kills stores c, b, then uses the allocation for store d, and then kills store e.}
\label{fig:HoistingAndPoisoning}
\end{figure*}



\algrenewcommand\algorithmicindent{1.5em}%

\begin{algorithm}[t]
\caption{Control-flow hoisting of AGU requests}\label{alg:speculation}
\begin{algorithmic}[1]

\State \textbf{Input:} $srcBlocks$ list of blocks that are the source of a LoD control dependency (defined in \S \ref{sec:LoDAnalysis}) 

\State \textbf{Output:} $SpecReqMap$ \{ basic block: list of hoisted requests to this block \} \\


\For{$srcBB \in srcBlocks$}
\\ \Comment{\textit{traverse the DAG from $srcBB$ to the loop latch}}
    \For{$fromBB \in reversePostOrder(srcBB)$}
        \If{$fromBB$ contains memory requests}
            \State hoist $fromBB$ requests to the end of $srcBB$
            \State add requests to $SpecReqMap[srcBB]$
        \EndIf
    \EndFor

\EndFor
    



\end{algorithmic}
\end{algorithm}

\subsection{Speculating Memory Requests} \label{sec:Speculative_Memory_Requests}

Algorithm~\ref{alg:speculation} describes our approach to introducing speculation in the AGU.
Given a LoD control dependency source block $srcBB$ we hoist all memory requests that are control dependent on $srcBB$ to the end of $srcBB$.
There can be multiple blocks with memory requests that have a LoD control dependency on $srcBB$, which poses the question in which order should they be hoisted to $srcBB$.
We use \textit{reverse post-order} in Algorithm~\ref{alg:speculation}.

Assuming reducible control flow, the CFG region from $srcBB$ to the loop latch is a DAG.
The reverse post-order of a DAG is its topological order.
Topological ordering gives us the useful property that given two distinct basic blocks $A$ and $B$ in a given loop, if $A \prec B$ in any path through the loop then $A \prec B$ in the topological ordering.
Note that there can be multiple topological orderings for a DAG, but it does not matter which one is chosen in our algorithm.

Algorithm~\ref{alg:speculation} traverses the CFG region from $srcBB$ to the end of its loop (or to the end of the function if $srcBB$ is not in a loop).
During the traversal, we ignore CFG edges leading to loop headers---we do not enter loops other than the innermost loop containing $srcBB$.

\subsubsection{Example of Hoisting}
Consider the CFG from Figure~\ref{fig:HoistingAndPoisoning}a.
There are three LoD control dependency source blocks ($2, 3, 5$) and five blocks with memory requests (blocks $2, 4, 5, 6, 7$ with requests $a, c, b, d, e$, respectively).
Assume that each block holds a single memory request---multiple memory requests within the same block are treated in the same way by our algorithms.
Figure~\ref{fig:HoistingAndPoisoning}c shows the topological order of the loop (block $1$ is omitted for brevity).
Algorithm~\ref{alg:speculation} will hoist $b, e$ to the end of block $2$, and $c, d, e$ to the end of block $3$---the result is presented in Figure~\ref{fig:HoistingAndPoisoning}b.
Note that the requests $b$ and $e$ were hoisted to both block $2$ and $3$, because they are reachable from both blocks.
Nothing is hoisted to block $1$ since it is not a LoD control dependency source.

\subsubsection{Nested LoD Control Dependencies} 
Block $5$ in Figure~\ref{fig:HoistingAndPoisoning}b does not contain any speculative requests because it itself has a LoD control dependency on block $2$ and $3$.
Algorithm~\ref{alg:speculation} considers only LoD control dependency source blocks that are not themselves the destination of another LoD control dependency.
Given a chain of nested LoD control dependencies, we only consider the chain head.
For example, the Figure~\ref{fig:HoistingAndPoisoning}a CFG has two LoD control dependency chains: $2, 5$ and $3, 5$---Algorithm~\ref{alg:speculation} considers only blocks $2$ and $3$.

\subsubsection{Why Topological Order in Algorithm~\ref{alg:speculation}?\nopunct} 
Topological order is needed to make it possible to match the order of speculative requests made in the AGU with the order of values that will arrive from the CU on all its possible CFG paths.
Consider, for example, the requests $b$ and $c$ in Figure~\ref{fig:HoistingAndPoisoning}a.
We first want to hoist $c$ to block $3$ before hoisting $b$, because there exists a CFG path where $c$ comes before $b$, but not vice versa.
If $b$ were hoisted before $c$, then the speculative requests order would be $b \prec c$, which would be impossible to match with values in the CU on the CFG path $3, 5, 7$.

\algrenewcommand\algorithmicindent{1.2em}%

\begin{algorithm}[t]
\caption{Mapping Poison Stores to CFG Edges in CU}\label{alg:poison}
\begin{algorithmic}[1]

\State \textbf{Input:} $SpecReqMap$ \{ basic block: list of requests hoisted to this block in Algorithm \ref{alg:speculation} \}
\\

\For{$specBB,\, specRequests \in SpecReqMap$}
    \For{$path \in allPathsToLoopLatch(specBB)$}
        \State $trueBlocks \gets \varnothing$ 
        \Comment{\textit{set keeps insertion order}}
        \For{$r \in specRequests$}
            \State $trueBB$ $\gets$ block where $r$ is true
            \State $trueBlocks.insert(trueBB)$
        \EndFor

        \For{$edge \in path$}
            \For{$trueBB \in trueBlocks$}
            
                \If{$edge_{dst} = trueBB$}
                    \State $trueBlocks.remove(trueBB)$ 
                    \State \textbf{break}
                    \Comment{\textit{to the next edge}}
                \EndIf
                
                \If{$trueBB$ not reachable from $edge_{dst}$}
                    \\ \Comment{\textit{reachability ignores loop backedges}}
                    \State poison $trueBB$ requests on $edge$
                    \Comment{\textit{Alg. \ref{alg:poison_edge}}}
                    
                    
                    \State $trueBlocks.remove(trueBB)$ 
                \EndIf
                
            \EndFor
        \EndFor
    \EndFor
\EndFor


\end{algorithmic}
\end{algorithm}

\subsection{Poisoning Mis-speculated Stores} \label{sec:Poisoning_mis_speculated_Stores}

Our strategy for poisoning misspeculations in the CU is to first map a poison call to a CFG edge, and then to map that edge to a poison store call contained in an existing or newly created basic block.

Algorithm~\ref{alg:poison} describes the first step.
Given block $specBB$ that contains speculative memory requests $specRequests$, we consider each path in the DAG from the $specBB$ to the loop latch (or function exit) in the CU.
We call the block where a $r \in specRequests$ becomes true the $trueBB$ (for example, the $trueBB$ for request $b$ in Figure~\ref{fig:HoistingAndPoisoning}a is block $5$).
For each CFG path, we use the $trueBlocks$ list to keep track of which requests were already used or poisoned on the path---the list contains the $trueBB$ for each $r \in specRequests$.

Given an edge in the traversal, the edge is skipped if the next $trueBB \in trueBlocks$ is still reachable from $edge_{dst}$.
This guarantees that the order of speculative requests in the AGU matches the order of values in the CU, i.e., a speculative request for a given $trueBB$ block is not poisoned immediately when $trueBB$ becomes unreachable if there is an earlier speculative request that can still be used.

\subsubsection{Example of Mapping Poison Stores to CFG Edges}
Figure~\ref{fig:HoistingAndPoisoning}c shows which CFG edges are poisoned given the original CFG in Figure~\ref{fig:HoistingAndPoisoning}a and the AGU CFG in Figure~\ref{fig:HoistingAndPoisoning}b.
For example, the path $3 \rightarrow 5 \rightarrow L$ will have: $poison(c)$ on the $3 \rightarrow 5$ edge; and $poison(d)$, $poison(e)$ on the $5 \rightarrow L$ edge (4th path from block 3 in Figure~\ref{fig:HoistingAndPoisoning}c).

\subsubsection{Mapping Poisoned Edges to Basic Blocks} \label{sec:poison_store_insertion}

\algrenewcommand\algorithmicindent{1.5em}%

\begin{algorithm}[t]
\caption{Poisoning Stores on Edges in CU}\label{alg:poison_edge}
\begin{algorithmic}[1]

\State \textbf{Input:} store request $r$; CFG $edge$; block $specBB$ where $r$ was speculated; block $trueBB$ where $r$ is true \\

\State $poisonBlockReuse \gets \varnothing$
\Comment{\textit{preserve set across calls}}
\If{$edge_{dst}$ is reachable from $trueBB$}
    \State $poisonBB \gets$ create new block on $edge$ \textbf{or} 
    \State \hspace{53pt} get from $poisonBlockReuse$ if exists
    \State append $poison(r)$ to the end of $poisonBB$
    \State $poisonBlockReuse.insert(poisonBB)$ 
\ElsIf{$specBB$ does not dominate $edge_{dst}$}
    \State $poisonBB \gets$ create new block on $edge$
    \State append $poison(r)$ to the end of $poisonBB$
    \\ \Comment{\textit{create recursively on $specBB \rightarrow edge_{src}$ paths}}
    \State create $\phi(1,\, specBB)$ value in $edge_{src}$
    \State branch from $edge_{src}$ to $poisonBB$ on $\phi = 1$
\Else
    \State append $poison(r)$ to the start of $edge_{dst}$
\EndIf
                    
                    
                

\end{algorithmic}
\end{algorithm}

Algorithm~\ref{alg:poison_edge} shows how poisoned CFG edges are mapped to actual poison calls placed in a concrete basic block.
Given a poisoned request $r$ on $edge$ (from $edge_{src}$ block to $edge_{dst}$ block), the $specBB$ block where $r$ was speculated in the AGU, and $trueBB$ where r becomes true there are three cases:
\begin{enumerate}
    \item There exists a path from $trueBB$ to $edge_{dst}$. In this case, we cannot insert $poison(r)$ in $edge_{dst}$, because we would end up with a CFG path where the store is both true and poisoned. To avoid this, we create a new $poisonBB$ block on $edge$ and append $poison(r)$ to it.
    \item There exists a path from the loop header to $edge_{dst}$ that does not contain $specBB$. In this case, we cannot insert $poison(r)$ in $edge_{dst}$, because we would end up with a CFG path where $r$ was not speculated in the AGU, but was poisoned in the CU. To avoid this, we create a new block $poisonBB$ on the edge and append $poison(r)$ to it. We also add steering instructions to the path from $specBB$ to $poisonBB$ that will branch from $edge_{src}$ to $poisonBB$ only if $specBB$ was encountered on the current CFG path.
    \item Otherwise, $poison(r)$ can safely be prepended to the start of $edge_{dst}$.
\end{enumerate}
Algorithm~\ref{alg:poison_edge} is executed only once per $(edge$,\, $r)$ tuple---a given request is poisoned at most once on a given edge.
Also, poison blocks created in case 1 in Algorithm~\ref{alg:poison_edge} can be reused to poison other requests.

\subsubsection{Example of Mapping Poison Edges to Blocks}

Consider how the poisoned edges in Figure~\ref{fig:HoistingAndPoisoning}c are mapped to basic blocks in Figure~\ref{fig:HoistingAndPoisoning}d.

\textit{Case 1:} Store $c$ is poisoned on the $3 \rightarrow 5$ edge.
Since there is a path from the true block of $c$ (block $4$) to the edge destination block (block $5$), we create a new block on the $3 \rightarrow 5$ edge and append $poison(c)$ to it.

\textit{Case 2:} Store $d$ is poisoned on both the $5 \rightarrow 7$ and $5 \rightarrow L$ edges.
The $specBB$ for $d$ is block $3$.
Since there exists the path $H \rightarrow 1 \rightarrow 2 \rightarrow 5$ that does not contain block $3$, we create a new block on the $5 \rightarrow 7$ edge with the $poison(d)$ call. 
We add steering instructions to the $3 \rightarrow 5$ and $3 \rightarrow 4 \rightarrow 5$ paths that will cause block $5$ to branch to the new poison block on the $5 \rightarrow 7$ edge only if block $5$ was reached from a path containing block $3$.

\textit{Case 3:} Store $c$ is also poisoned on the $3 \rightarrow 6$ edge, but here it is safe to prepend $poison(c)$ to the start of block $6$.

\subsection{Merging Poison Blocks} \label{sec:cfg_simplification}
\begin{figure}[t]
\centering
     \centering
     \includegraphics[width=0.47\textwidth]{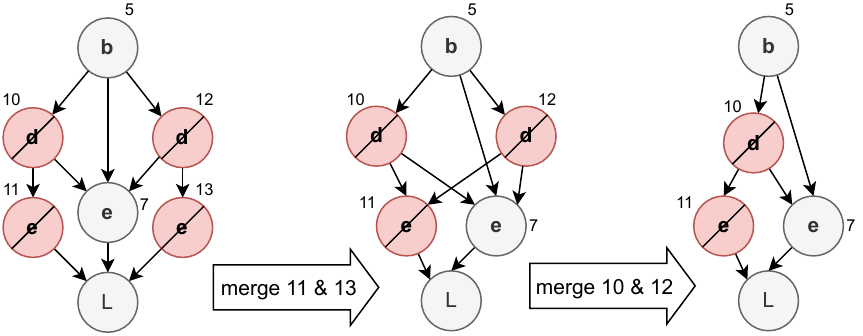}

\caption{Basic blocks with the same list of poison stores and the same immediate successor can be merged in the CU.}
\label{fig:MergePoisonBB}
\end{figure}

Case 1 and 2 of Algorithm~\ref{alg:poison_edge} might create multiple poison blocks for the same store on different CFG edges.
It is possible to merge two poison blocks into one if they contain the same list of poison stores and if they have the same list of immediate successors.
When merging, we keep instructions from just one block.
We apply such merging iteratively after Algorithms \ref{alg:poison} and \ref{alg:poison_edge}.
For example, Figure~\ref{fig:MergePoisonBB} contains a CFG sub-region of our running example from Figure~\ref{fig:HoistingAndPoisoning}.
Algorithm~\ref{alg:poison_edge} inserted poison blocks $10, 11, 12, 13$ to poison stores $d$ and $e$.
Block pairs $(11, 13)$ and $(10, 12)$ can be merged.

\subsection{Speculative Load Consumption}

Speculative loads are relatively easy to support.
To match the order of \texttt{load\_consume} calls in the CU with the order of speculative \texttt{send\_load\_addr} calls in the AGU we can hoist the \texttt{load\_consume} calls to the same block where the corresponding \texttt{send\_load\_addr} were hoisted in the AGU.
Then, the CU can either use the load value or discard it.
After hoisting, we need to update all $\phi$ instructions that use the load value, since the basic block containing the loaded value will have changed.
Alternatively, we can transform $\phi$ instructions using the load value into \texttt{select} instructions.



\section{Safety and Liveness} \label{sec:proof}
We prove that our transformations preserve the sequential consistency of the original program and that they do not introduce deadlock.
Deadlock freedom is a corollary of sequential consistency, so we focus only on the latter.
We show that on every CFG path the order of speculative store requests in the AGU matches the order of store values in the CU, and that the non-poisoned store value sequence in the CU matches the store sequence of the original code.

In the following discussion, we assume blocks with a single store; the proof trivially extends to blocks with multiple stores since all speculative stores in the same block are treated the same.
We also assume that all stores are speculative, since the relative order between non-speculative and speculative stores is guaranteed by definition:
given a a non-speculative store $s_1$ and a speculative store $s_2$, Algorithm~\ref{alg:speculation} will not change the relative program order of $s_1$ and $s_2$, i.e., if $s_1 \prec s_2$ in the original program order, then it is not possible to hoist $s_2$ such that $s_2 \prec s_1$.
This follows from the control dependency definition (\S\ref{sec:LoDAnalysis})---$s_2$ hoisting stops at its LoD control dependency source $srcBB$, which must come after the block containing $s_1$ in topological order.
If $srcBB$ would come after $s_1$ in topological order, then the block containing $s_1$ would also have a LoD control dependency on $srcBB$ and would have been hoisted, which is a contradiction since we assumed that $s_1$ was non-speculative.
A similar argument can be made if $s_2 \prec s_1$ in the original program.

\begin{lemma}[Sequential Consistency] \label{lemma}
Given an ordered list of $n$ speculative store requests $L_a = \{a_0,\, a_1,\, ...,\, a_{n-1}\}$ made in the AGU loop CFG on some fixed iteration $k$, Algorithms \ref{alg:poison} and \ref{alg:poison_edge} transform the CU CFG such that every possible path through its loop CFG on iteration $k$ produces an ordered list of $n$ tagged store values $L_v = \{(v_0,\, p_0),\, (v_1,\, p_1),\, ...,\,$ $(v_{n-1}, p_{n-1})\}$, such that each $(a_i, v_i, p_i) , 0 \leq i < n$ triple corresponds to a $A[a_i] \leftarrow v_i$ store in the original program CFG, and $p_i=1$ (poison bit) if that store is not executed on the path through the original loop CFG on iteration $k$.
\end{lemma}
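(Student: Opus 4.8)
The plan is to fix iteration $k$ and any control-flow path $\pi$ realizable on that iteration. Because the AGU and CU evaluate the same branch conditions and differ only in that the AGU issues requests before the governing branch resolves, $\pi$ is common to both, and the CU CFG is just the original CFG with poison blocks spliced onto edges. First I would pin down a normal form for $L_a$: by Algorithm~\ref{alg:speculation}, the requests hoisted into each srcBB on $\pi$ are emitted in reverse-post-order, i.e.\ topological order of their $trueBB$s, and the srcBBs themselves are encountered in $\pi$-order; moreover the requests issued at a srcBB cover exactly the $trueBB$s reachable from it (those off $\pi$ are the speculative ones that must later be poisoned). Invoking the property stated in \S\ref{sec:Speculative_Memory_Requests} that $\pi$-order refines topological order, I would argue that $L_a$, read along $\pi$, is sorted by a single fixed topological order $\tau$ on the $trueBB$s of the issued requests. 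This reduces the lemma to showing the CU emits the same $n$ requests along $\pi$ in $\tau$-order, each tagged with the correct poison bit.

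For the order match I would analyze Algorithm~\ref{alg:poison} as a front-to-front merge of the $\tau$-sorted $trueBlocks$ list against the edges of $\pi$. The argument is an induction over the edges of $\pi$ maintaining the invariant that, after a prefix, the emitted prefix of $L_v$ is exactly the $\tau$-prefix of $L_a$ whose $trueBB$s have been reached or rendered unreachable, in order. Each edge either advances the frontmost pending $trueBB$ when $edge_{dst}=trueBB$ (a real value is consumed, $p=0$) or poisons it when it is unreachable from $edge_{dst}$ ($p=1$); the guard that skips an edge while the frontmost $trueBB$ is still reachable is what prevents a later request from being resolved ahead of an earlier live one, so $\tau$-monotonicity and the invariant are preserved. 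Concluding the induction gives that $L_v$ has length $n$ and the same request-to-position correspondence as $L_a$.

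Next I would settle the poison bits and the faithfulness of placement. By construction of the walk, $p_i=0$ holds exactly when $\pi$ reaches $trueBB_i$, which is exactly when the original store executes on $\pi$, and $p_i=1$ exactly when $trueBB_i$ is made unreachable, i.e.\ the store is skipped on iteration $k$ --- precisely the condition the lemma demands. I would then check that Algorithm~\ref{alg:poison_edge} realizes each edge decision without perturbing the emitted sequence on any path: its three cases are exactly the hazards to rule out. Case~1 (a path from $trueBB$ to $edge_{dst}$) would otherwise admit a path on which a store is both consumed and poisoned; case~2 (a header-to-$edge_{dst}$ path bypassing $specBB$) would otherwise poison a request that was never speculated, and is fixed by the $\phi$/steering guard so the poison fires only when $specBB$ was visited; case~3 is the residual safe prepend. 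Since each $(edge,r)$ is handled once and the merging of \S\ref{sec:cfg_simplification} preserves per-path instruction sequences, placement leaves $L_v$ unchanged on every $\pi$.

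The hard part will be the cross-srcBB ordering step in the first paragraph. A request for one store may be hoisted into several srcBBs, and nested LoD chains mean the srcBB issuing a request need not dominate its $trueBB$, so establishing that $L_a$ read along an arbitrary $\pi$ is globally $\tau$-sorted, and that each issued request appears exactly once per iteration, requires leaning carefully on the fact that Algorithm~\ref{alg:speculation} hoists only from chain-head srcBBs and that such heads are either mutually exclusive or topologically nested along any single path. A secondary subtlety I would need to handle cleanly is the backedge-ignoring reachability used in Algorithm~\ref{alg:poison}: I must show it correctly confines attention to iteration $k$, so that a $trueBB$ reachable only through the latch on a later iteration is still counted unreachable --- and hence poisoned --- on the current one.
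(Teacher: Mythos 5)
Your proposal is correct and follows essentially the same route as the paper's proof: an induction along the CU path that resolves each pending AGU request against successive edges via the same three-case analysis (destination is the $trueBB$, $trueBB$ unreachable hence poisoned, $trueBB$ still reachable so skip). You are in fact more explicit than the paper in the two places you flag as hard --- the global topological sortedness of $L_a$ across multiple source blocks with nested LoD chains (which the paper only motivates informally in \S\ref{sec:Speculative_Memory_Requests}) and the per-path faithfulness of Algorithm~\ref{alg:poison_edge}'s block placement and the subsequent merging (which the paper compresses into one sentence of case~2) --- so the residual gap you identify lies in the paper's own argument as much as in yours.
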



\begin{proof}

We use a proof by induction on the transformed CFG.

\textit{Base case:} $L_a = \varnothing$ (no speculated requests in the AGU). Algorithm~\ref{alg:poison} does not change the CU CFG. Thus, the order of store addresses in the AGU and store values in the CU trivially matches, $L_a = L_v = \varnothing$.

\textit{Inductive hypothesis:} assume Lemma \ref{lemma} holds at basic block $B_i$ in the current CFG path. 
All store requests $a_j \in L_a$ contained in blocks reached before $B_i$ in the path were matched with the correct store value call $(v_j,\, p_j) \in L_v$, such that $p_j = 1$ if $A[a_j] \leftarrow v_j$ was not executed on the path in the original loop CFG.

\textit{Inductive step:} The next store address in the AGU $L_a$ sequence is $a_{j+1} \in L_a$.
The next store value in the CU CFG path should be $(v_{j+1},\, p_{j+1}) \in L_v$, where $p_{j+1} = 1$ iff the store $A[a_{j+1}] \leftarrow v_{j+1}$ is not reached on the current CFG path in the original program.
Algorithm~\ref{alg:poison} considers the $edge_{src} \rightarrow edge_{dst}$ next.
There are three cases: 
\begin{enumerate}
    \item $edge_{dst} = trueBB$, where $trueBB$ is the block containing the store $A[a_{j+1}] \leftarrow v_{j+1}$ in the original program CFG. In this case, Algorithm~\ref{alg:poison} will not poison this store on this path through the CU CFG, i.e., the next item in the $L_v$ sequence will be the correct $(v_j,\, 0)$. 
    
    \item $edge_{dst} \neq trueBB$ and $trueBB$ is not reachable from $edge_{dst}$, in which case Algorithm~\ref{alg:poison} will insert a poison store on this edge. Algorithm~\ref{alg:poison_edge} will map this poison store to a basic block, with the effect that taking the $edge$ will result in the poison call being executed and control transferring to $edge_{dst}$. The next item in the $L_v$ sequence will be the correct $(v_j,\, 1)$. 
    
    \item $edge_{dst} \neq trueBB$ and $trueBB$ is reachable from $B_l$, in which case Algorithm~\ref{alg:poison} will traverse the path until Case 1 or 2 is matched.
\end{enumerate}

Since Lemma \ref{lemma} holds for the base case, for basic blocks on the path up to $B_i$, and for some successor block of $B_i$, it must hold at any block on the path.
If it holds at any block on the path, it holds for the whole path.
Since a given store request $r$ is poisoned at most once on a given CFG edge and since, by definition of Algorithm~\ref{alg:poison}, any given path will contain at most one edge where $r$ is poisoned, we conclude that Lemma \ref{lemma} holds for all paths.
\end{proof}



\section{Applications} \label{sec:applications}

In this section, we highlight three applications for our work: DAE-based prefetchers in CPUs/GPUs, CGRAs, and specialized accelerators generated from HLS.
In the next section, we choose HLS as an evaluation vehicle due to its simplicity compared to CPU/GPU prefetchers where the evaluation results can easily be polluted by other architectural factors like cache behavior, branch prediction, etc.
However, we emphasize that our speculation support in DAE does not rely on any HLS-specific features and can be applied wherever speculation is combined with the DAE technique.

\subsection{CPU/GPU Prefetchers}

Most existing works on CPU/GPU prefetchers follow the DAE principle and rely on the compiler to decoupled address generation from compute \cite{outrider_decoupled_strands, phloem, desc_cpu, gpu_decoupled, nvidia_hopper_tma, wasp_gpu_prefetch, qin2023roma}.
All of these works suffer from the control-dependency loss of decoupling (LoD) problem (\S \ref{sec:LoDAnalysis}).
The work in \cite{desc_cpu} discusses adding speculation and predicated stores to the CPU microarchitecture to mitigate LoD, but their compiler only supports simple diamond and triangle control flow shapes.
In this paper, we have demonstrated generalized compiler support for speculation in DAE, making these works viable for general control flow and thus applicable to a broader set of codes.

\subsubsection{Example}
The CPU prefetcher proposed in \cite{desc_cpu} (on which most of the other work is based) separates address generation from compute and extends the ISA with \texttt{store\_addr}, \texttt{load\_produce}, \texttt{store\_val}, \texttt{load\_consume}, and \texttt{store\_inv} instructions that can be directly targeted by our compiler.

\subsection{Coarse Grain Reconfigurable Architectures}

A CGRA consists of an array of PEs, each with small memories, connected by a network.
A CGRA compiler is typically co-designed with the hardware, as the PEs are typically statically scheduled.
The job of the compiler is to map the Control/Data Flow Graph (CDFG) to the PEs, and many works follow the DAE technique to tackle the memory wall problem \cite{fifer, Plasticine_rdu, SambaNova, fan2023_europar_decoupled_dataflow, hong2020decoupling_cgra, pellauer2019buffets, dae_cgra_cascade, cgra_dae_softbrain}.
Our work can help mitigate LoD events when mapping to CGRAs.

\subsubsection{Example}
The CGRA proposed in \cite{cgra_dae_softbrain} is an example of a modern streaming dataflow CGRA. 
All communication in the CGRA is FIFO-based, and address generation is explicitly decoupled at compile time into AGUs.
The compiler generates commands to produce address streams, and to consume or produce values.
Control flow is handled with predication and there is a \textit{SD\_Clean\_Port} command to throw away a value from an output port that can be used to implement predicated stores.

\subsection{High-Level Synthesis}

In HLS, the CDFG of an algorithm is implemented directly in hardware following a spatial execution model with the freedom to customize the memory system.
This makes decoupling easier in HLS compared to the temporal CPU/GPU execution model.
HLS-generated accelerators can directly benefit from our work today without any changes, and it is in this domain that we evaluate our implementation in the next section.

Although existing HLS compilers are successful in building non-trivial accelerators for regular code (e.g., \cite{google_vcu_asplos}), their static scheduling techniques are sub-optimal for irregular codes (for the same reason why traditional VLIW compilers were sub-optimal for irregular codes).
Many research works in academia and industry have exploited DAE in HLS to improve the efficiency of HLS-generated accelerators for irregular codes \cite{decoupled_memory_prefetching, dae_accel, decoupled_memory_wawrzynek, brainwave_ms_chung2018serving, ThunderGP, HLS_runahead, szafarczyk_fpt_23, szafarczyk_fpl23, szafa_fpga25}.
By adding compiler speculation support, DAE in HLS can be used on a broader set of codes, which we demonstrate in the next section.

\section{Evaluation} \label{sec:evaluation}

In this section, we answer the following questions: 
\begin{itemize}
\item What is the performance benefit of using a DAE architecture (enabled by our speculation approach) to accelerate codes with LoD control dependencies? 
\item What is the cost of mis-speculation in our approach? 
\item What is the impact on code size (accelerator area usage) of our speculation approach? 
\item What is the scalability for nested control flow, which increases the number of poison stores and blocks?
\end{itemize}

We make our work and evaluation publicly available \cite{spec_dae_zenodo_repo}.

\subsection{Methodology}

We generate algorithm-specific accelerators using HLS targeting an Intel Arria 10 FPGA.
The C codes are taken directly from benchmark suites without adding any HLS-specific annotations (excluding dynamic structures, like queues, that were replaced with HLS-specific libraries).

We use the LLVM-based Intel SYCL HLS compiler \cite{sycl_github} and apply our standard DAE transformation (\S \ref{sec:compiler_Preliminaries}) and our proposed speculation transformation (\S \ref{sec:compiler_support}) as LLVM passes.
The codes use deterministic dual-ported on-chip SRAM capable of 1 read and 1 write per cycle.
To enable out-of-order loads, we use a load-store queue (LSQ) designed for HLS (load/store queue sizes of 4/32), which is commonly found on accelerators for irregular codes \cite{desc_cpu, dac_runtime_dep_check_with_shiftreg, szafarczyk_fpl23, josipovic_dynamatic_2022}.

We report cycle counts from ModelSim simulations.
We do not report circuit frequency since our approach does not affect the critical path (see \cite{spec_dae_zenodo_repo} for such details).
Area usage is obtained after place and route using Quartus 19.2.

\subsubsection{Baselines}
For each benchmark, we synthesize the following architectures which represent current state-of-the-art approaches to HLS :
\begin{itemize}
    \item \texttt{STA}: the default, industry-grade approach using static scheduling  \cite{sycl_github}. Loads that cannot be disambiguated at compile time execute in order.
    \item \texttt{DAE}: a DAE architecture without speculation. OoO loads are enabled by an LSQ. This is the state-of-the-art approach to irregular codes in academia \cite{szafarczyk_fpl23}, but it suffers from control-dependency LoDs.
    \item \texttt{SPEC}: the same as \texttt{DAE}, but with our speculation technique which mitigates control-dependency LoDs.
    \item \texttt{ORACLE}: the same as \texttt{DAE}, but all LoD control dependencies are removed manually from the input code. The \texttt{ORACLE} results are wrong, but give a bound on the performance of \texttt{SPEC} and show its area overhead. 
\end{itemize}

\begin{figure}[t]
\centering
\includegraphics[width=0.47\textwidth]{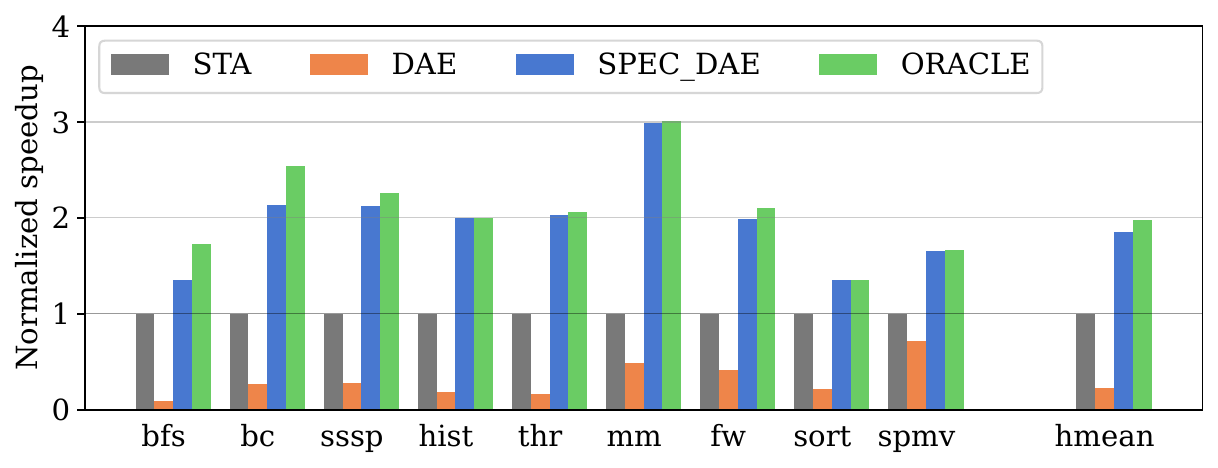}

\caption{Performance of \texttt{DAE}, \texttt{SPEC} and \texttt{ORACLE} normalized to \texttt{STA}. \texttt{SPEC} achieves an average $1.9\times$ (up to $3\times$) speedup.}
\label{fig:PlotPerformance}
\end{figure}

\begin{table*}[!t]
\centering

\renewcommand{\arraystretch}{1}

\caption{Absolute performance and area usage of \texttt{STA} \cite{sycl_github}, \texttt{DAE} \cite{szafarczyk_fpl23}, \texttt{SPEC}, and \texttt{ORACLE} accelerators. (*bc uses two LSQs).}

\label{table:benchmarkTable}
\centering
\begin{tabular}{l | cc | c | cccc | cccc }
\hline
\multirow{2}{*}{\textbf{Kernel}} & \multicolumn{2}{c|}{\textbf{Poison}} & \textbf{Mis-spec.} & \multicolumn{4}{c|}{\textbf{Cycles}} & \multicolumn{4}{c}{\textbf{Area (ALMs \cite{alm})}} \\


& Blocks & Calls & \textbf{Rate} &
STA & DAE & SPEC & ORACLE &
STA & DAE & SPEC & ORACLE \\
\hline
\hline

bfs & 1 & 1 & 95\% & 37,243 & 398,616 & 27,561 & 21,569 & 7,361 & 7,525 & 13,404 & 13,706 \\
bc & 2 & 2 & 95\%, 82\% * & 109,061 & 406,178 & 51,109 & 42,942 & 9,709 & 10,859 & 16,582 & 16,558 \\
sssp & 1 & 1 & 95\% & 108,995 & 391,426 & 51,227 & 48,208 & 10,565 & 11,668 & 17,426 & 17,395 \\
hist & 1 & 1 & 2\% & 2,061 & 11,100 & 1,033 & 1,031 & 2,391 & 2,807 & 3,117 & 3,137 \\
thr & 1 & 3 & 97\% & 2,131 & 13,147 & 1,052 & 1,034 & 5,662 & 6,144 & 6,278 & 6,622 \\
mm & 1 & 2 & 31\% & 12,164 & 25,125 & 4,069 & 4,044 & 5,076 & 4,986 & 7,813 & 7,528 \\
fw & 1 & 1 & 85\% & 6,821 & 16,485 & 3,433 & 3,238 & 3,407 & 4,210 & 4,008 & 4,007 \\
sort & 1 & 2 & 49\% & 2,358 & 11,109 & 1,748 & 1,746 & 2,814 & 4,361 & 5,260 & 5,269 \\
spmv & 1 & 1 & 32\% & 13,319 & 18,693 & 8,028 & 7,984 & 3,895 & 5,085 & 4,416 & 4,336 \\

\hline
\hline

\multicolumn{4}{r|}{Harmonic Mean:} & 1 & 3.2 & 0.51 & 0.48 & 1 & 1.16 & 1.42 & 1.36 \\


\end{tabular}

\end{table*}

\subsubsection{Benchmarks}

DAE architectures optimize the latency between memory and compute and are most beneficial for memory-bound codes \cite{desc_cpu}, especially codes with an irregular memory access pattern that prevents static prefetching \cite{outrider_decoupled_strands}.
We evaluate nine such benchmarks from the graph and data analytics domain, using the GAP graph benchmark suite \cite{gap_benchmark_suite} and an HLS benchmark suite \cite{hls_dyn_bench_jianyi_cheng} of irregular programs.
We select only codes that can benefit from our \texttt{SPEC} approach, i.e., codes with LoD control dependencies:
\begin{itemize}
    \item \texttt{bfs}: breadth-first traversal through a graph.
    \item \texttt{bc}: betweenness centrality of a single node in a graph.
    \item \texttt{sssp}: single shortest path from a single node to all other nodes in a graph using Dijkstra's algorithm.
    \item \texttt{hist}: histogram, similar to Figure~\ref{fig:DecoupledAccessExecute_b} (size 1000).
    \item \texttt{thr}: zeroes RGB pixels above threshold (size 1000).
    \item \texttt{mm}: maximal matching in a bipartite graph (2000 edges).
    \item \texttt{fw}: Floyd-Warshall distance calculation of all node-to-node pairs in a dense graph (10$\times$10 distance matrix).
    \item \texttt{sort}: using bitonic mergesort (size 64).
    \item \texttt{spvm}: sparse vector matrix multiply (20$\times$20 matrix).
\end{itemize}
For the graph codes (\texttt{bfs}, \texttt{bc}, \texttt{sssp}) we use a real-world graph \texttt{email-Eu-core} with 1005 nodes and 25,571 edges.

\subsection{Performance}

Figure~\ref{fig:PlotPerformance} reports normalized speedups of each technique over \texttt{STA}. 
Our \texttt{SPEC} approach gives on average a $1.9\times$ (and up to $3\times$) speedup over \texttt{STA}.
This is within 5\% of the \texttt{ORACLE} performance.
In contrast, \texttt{DAE} without speculation sees a dramatic performance degradation over \texttt{STA}, because the AGU, DU, CU communication is sequentialized.

\subsubsection{Mis-speculation Cost}
The \texttt{SPEC} and \texttt{ORACLE} performance gap is highest on the \texttt{bfs} and \texttt{bc} codes, because of its deep pipeline between the load and store that form a RAW hazard.
The deep pipeline means that more store allocations need to be held by the LSQ to guarantee perfect pipelining \cite{lsq_sizing_dynamatic}.
This, together with a high mis-speculation rate in these benchmarks (Table \ref{table:benchmarkTable}), can cause the LSQ to fill with store addresses that are mis-speculated, potentially stalling later loads that have to wait for future store addresses to arrive.
This problem can be solved by increasing the store queue size in the LSQ.
The increased number of requests and the need for more buffering is one of the limitations of our approach. 
Codes with a shallower pipeline that do not need large LSQ sizes have no mis-speculation penalty.

To prove this, we choose three benchmarks where we can instrument the input data so that we can vary the mis-speculation rate.
Table \ref{table:misspecTable} shows how the mis-speculation cost changes as the mis-speculation rate increases.
As can be seen, there is no correlation between the mis-speculation rate and cost, with the slight variability in clock cycle counts attributable to the subtle difference in the number of true RAW hazards due to the varying data distribution.
\begin{table}[h!]
\centering

\renewcommand{\arraystretch}{1}

\caption{\texttt{SPEC} cycle counts as mis-speculation rate changes.}

\label{table:misspecTable}
\centering
\begin{tabular}{l | cccccc | c  }
\hline
\multirow{2}{*}{\textbf{Kernel}} & \multicolumn{6}{c|}{\textbf{Mis-speculation rate}} & \multirow{2}{*}{$\sigma$} \\


& 0\% & 20\% & 40\% & 60\% & 80\% & 100\% & \\
\hline

hist & 1044 & 1013 & 1029 & 1029 & 1012 & 1051 & 16 \\
thr & 1082 & 1109 & 1047 & 1073 & 1058 & 1071 & 21 \\
mm & 4107 & 4096 & 4074 & 4063 & 4106 & 4081 & 18 \\

\end{tabular}

\end{table}

\subsection{Code Size} \label{sec:Code_Size_Increase}

Our speculation approach can increase the number of blocks in the CU, especially for codes with deeply nested control flow. 
In HLS, an increased number of blocks can result in a higher area usage due to larger scheduler complexity \cite{modulo_sched}.

Table \ref{table:benchmarkTable} shows the absolute area usage of all accelerators.
We observe virtually no area overhead of \texttt{SPEC} over \texttt{ORACLE} on the evaluated benchmarks.
This is because most of the codes have at most two control-flow nesting levels where new poison blocks are inserted, and sometimes it is possible to reduce the number of blocks using our merging technique (e.g., two poison blocks in \texttt{mm} merged into one).


\subsubsection{Impact of Nested Control Flow on Area Usage}

To give a more meaningful measure of how nested control flow impacts the area overhead of our \texttt{SPEC} approach, we create a synthetic benchmark template where we can tune the number of poison blocks generated by \texttt{SPEC}:

\begin{algorithmic}
\If {$x > 0$}
    \State $store_1$
    \If {$x > 1$}
        \State $store_2$
        \State \textbf{if} $x > 2$ \textbf{then}  ...
    \EndIf
\EndIf
\end{algorithmic}
Each nesting level in this template will result in one poison block in the \texttt{SPEC} architecture.
With $n$ stores, and assuming one store per nesting level, there will be $n$ poison blocks and $\sum_{i=1}^{n} i = \frac{n \times (n+1)}{2}$ poison calls.

Figure~\ref{fig:PlotNumBasicBlocks} shows how the area and performance overhead of \texttt{SPEC} over \texttt{ORACLE} changes as more poison blocks are needed.
The performance overhead is close to 0\% and does not change with more poison blocks.
The area overhead of the AGU unit is similarly close to 0\%, because \texttt{SPEC} hoists stores out of the \textit{if}-conditions, causing the blocks to be deleted. 
The area overhead of the CU unit grows by a few percent ($<5\%$) with each added poison block, but even for the pathological case of eight nested \textit{if}-conditions the overhead is below $25\%$.
In real codes, with more compute and lower control-flow nesting, the area overhead of \texttt{SPEC} should be minimal.

\begin{figure}[t]
\centering
\includegraphics[width=0.475\textwidth]{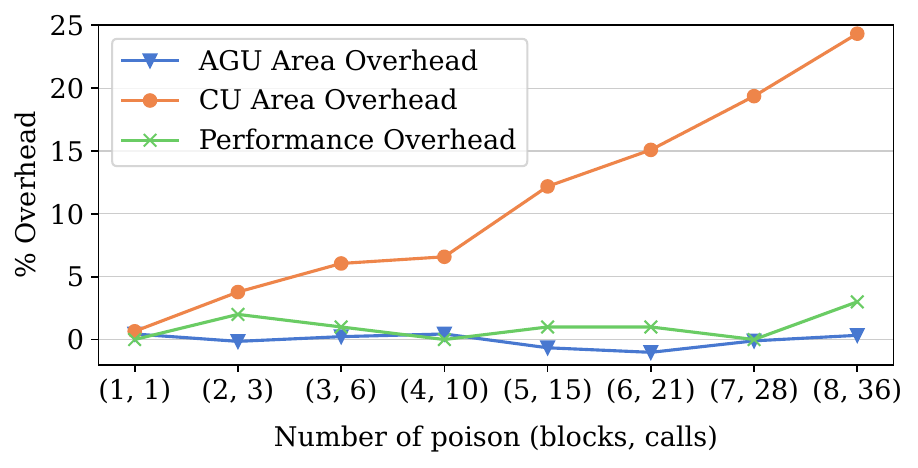}

\caption{Change in area and performance overhead of \texttt{SPEC} over \texttt{ORACLE} as the number of poison blocks and calls grows.}
\label{fig:PlotNumBasicBlocks}
\end{figure}




\section{Related Work}

\textit{Program slicing} is used beyond DAE architectures.
Decoupled Software Pipelining (DSWP) \cite{dswp} is a popular transformation that decouples strongly connected components in the program dependence graph into separate pipeline stages mapped over multiple PEs communicating via FIFOs.
The PEs can be CPU threads, or pipeline stages in an accelerator generated by HLS \cite{cgpa_dswcp_for_hls}.
Control dependent pipeline stages in DSWP can also be executed speculatively, although stages with memory operations require versioned memory \cite{spec_dswp}.

\textit{Control speculation} has its roots in compilers for VLIW machines.
Instruction scheduling in HLS is very similar to VLIW scheduling (no hardware support for speculation, static mapping to functional units, etc.), with many algorithms like modulo-scheduling and \textit{if}-conversion originally developed for VLIW directly applicable to HLS \cite{modulo_sched, if_conversion, park1991predicated}.
Most recently, predicated execution in the form of gated SSA was proposed for HLS with speculation support \cite{unified_memory_dependency_spec_hls}.
The speculation support in this and other works requires costly recovery on mis-speculation \cite{Josipovic_Guerrieri_Ienne_2019, Thielmann_Load_Speculation, ctrl_spec_and_if_conversion, vliw_sentinel_scheduling, hls_cancel_tokens_koch, dsagen}.
Efficiently squashing speculative computation on the wrong paths in a spatial dataflow architecture is hard, because the architectural state is distributed  \cite{Budiu_Artigas_Goldstein_2005}.
Our speculative DAE sidesteps this issue, not requiring any recovery: we speculate early (run ahead) in the AGU, and later handle mis-speculations in the CU by taking an appropriate path in its CFG.


\textit{Control-flow handling in GPUs} is usually implemented via \textit{predication}.
The algorithms used to calculate predicate masks and re-convergence points resemble our work \cite{chap_gpu}.
The SIMT stack approach in GPUs pushes predicate masks onto a stack when entering a control-flow nesting level, and pops when exiting. 
Our Algorithm~\ref{alg:speculation} implementing speculative requests can be seen as a pass through the CFG with only push operations, where the push is onto individual stacks of control-dependency sources.
Dually, our inserting of poison calls in Algorithm~\ref{alg:speculation} can be seen as a pass through the CFG with only pop operations where the placement of the pops follows a certain policy just like modern SIMT compilers follow different policies to prevent SIMT deadlock and livelock, or to improve performance \cite{MIMD_synchronization_on_SIMT_architectures}, instead of popping at the immediate post-dominator.


\section{Conclusion}

We have presented general compiler support for speculative memory operations in DAE architectures that tackles the LoD problem resulting from control dependencies.
We have proposed CFG transformations implementing speculation in the address generation slice, and poisoning of mis-speculations in the compute slice, with a proof of correctness.

We have presented three applications where our work improves support for the efficient execution of irregular codes: DAE-based CPU/GPU prefetchers that require compiler support, CGRA architectures, and HLS-generated specialized accelerators. 
We have evaluated our work on HLS-generated accelerators, showing an average $1.9\times$ (up to $3\times$) speedup over non-DAE accelerators on a set of irregular benchmarks where DAE is not possible without our speculation.
Our approach has no mis-speculation cost and a small code size footprint, scaling well to deeply nested control flow.

Future work could investigate vector-parallelism support by filling a vector of speculative requests in the AGU and producing a store mask in the CU, similar to the recent work on decoupled vector runahead prefetching in CPUs \cite{decoupled_vector_runahead}.


\section*{Data-Availability Statement}

We make our work and evaluation publicly available \cite{spec_dae_zenodo_repo}.

\begin{acks}
We thank Intel for access to FPGAs through their DevCloud.
This work was supported by a UK EPSRC PhD scholarship. 
\end{acks}

\bibliographystyle{ACM-Reference-Format}
\bibliography{references}


\end{document}